\numberwithin{equation}{section}
\theoremstyle{plain}
\newtheorem{theo}{Theorem}[section]
\newtheorem{lem}{Lemma}[section]
\newtheorem{prop}{Proposition}[section]
\newtheorem{Conj}{Conjecture}[section]
\theoremstyle{definition}
\newtheorem{rem}{Remark}[section]
\newcommand{\unit}{1\!\!1}
\newcommand{\R}{\mathbb{R}}
\newcommand{\Rtre}{\mathbb{R}^{3}}
\newcommand{\minG}{\mathcal{M}_{\mathcal{G}}}
\newcommand{\G}{\mathcal{G}}
\newcommand{\E}{\mathcal{E}}
\newcommand{\F}{\mathcal{F}}
\newcommand{\minE}{\mathcal{M}_{\mathcal{E}}}
\newcommand{\minF}{\mathcal{M}_{\mathcal{F}}}
\newcommand{\Pel}{\text{X}_{\text{el}}}
\newcommand{\Vel}{\text{V}_{\text{el}}}
\newcommand{\Pph}{\text{X}_{\text{ph}}}
\newcommand{\Vph}{\text{V}_{\text{ph}}}
\DeclareMathOperator{\spn}{span}
\DeclareMathOperator{\infspec}{inf\,spec}
\DeclareMathOperator{\dist}{dist}
\begin{document}

\title[]{The Effective Mass Problem for the Landau--Pekar Equations}

\author{Dario Feliciangeli,  Simone Rademacher and Robert Seiringer}
\address{IST Austria, Am Campus 1, 3400 Klosterneuburg, Austria}

\date{\today}

\maketitle

\begin{abstract}
We provide a definition of the effective mass for the classical polaron described by the Landau--Pekar equations. It is based on a novel variational principle, minimizing the energy functional over states with given (initial) velocity. The resulting formula for the polaron's effective mass agrees with the prediction by Landau and Pekar \cite{landau1948effective}. 
\end{abstract}

\section{Introduction} 

The polaron is a model of an electron interacting with its self-induced polarization field of the underlying crystal. The description of the polarization as a quantum field corresponds to the Fr\"ohlich model \cite{frohlich}.  In the  classical approximation, on the other hand, the dynamics of a polaron is described by the Landau--Pekar (LP) equations \cite{Landau,Pekar,landau1948effective}.  For $(\psi_t, \varphi_t) \in H^1(\Rtre)\times L^2(\Rtre)$, where $\psi_t$ is the electron wave function and $\varphi_t$  denotes the phonon field, these equations read in suitable units (see \cite{frohlich} or \cite{alexandrov2010advances})
\begin{equation}\label{eq:LP}
\begin{aligned}
i \partial_t \psi_t  & = h_{ \varphi_t  }  \psi_t,  \\
i \alpha^2 \partial_t \varphi_t  & = \varphi_t + \sigma_{\psi_t},
\end{aligned}
\end{equation}
where $h_{\varphi}$ is the Schr\"odinger operator 
\begin{equation}
h_{\varphi} = - \Delta  + V_\varphi
\end{equation}
with potential 
\begin{equation}
V_\varphi(x) = 2 
\Re\, [ ( - \Delta)^{-1/2} \varphi ] (x)= 
\pi^{-2} |x|^{-2} * \Re\, \varphi ,
\end{equation}
and
\begin{equation}
\sigma_{\psi} (x) = 
\left[ (- \Delta)^{-1/2} \vert \psi \vert^2\right]  (x) = 
(2\pi^2)^{-1}  |x|^{-2} * |\psi|^2 ,
\end{equation}
where $*$ denotes convolution. 
The parameter $\alpha>0$ quantifies  the strength of the coupling of the electron's charge to the polarization field. 

Despite its long history, the polaron model continues being actively investigated. For recent experimental and numerical work, we refer to \cite{Franchini,Mishchenko,Puppin,sio,sio2} and references there.

The LP equations can be derived from the dynamics generated by the (quantum) Fr\"ohlich Hamiltonian for suitable initial states in the strong coupling limit $\alpha\to \infty$ \cite{LMRSS} (see also \cite{FS,FG,Griesemer,LRSS,M} for earlier results on this problem). One of the outstanding open problems concerns the polaron's effective mass \cite{LS,seiringer2020polaron,spohn1987effective}: due to the interaction with the polarization field, the electron effectively becomes heavier and behaves like a particle with a larger mass. This mass increases with the coupling $\alpha$, and is expected to diverge as $\alpha^4$ as $\alpha\to\infty$. A precise asymptotic formula was obtained by Landau and Pekar \cite{ landau1948effective} based on the classical approximation, and hence it is natural to ask to what extent the derivation of the LP equations in \cite{LMRSS} allows to draw conclusions on the effective mass problem.

It is, however, far from obvious how to rigorously obtain the effective mass even on the classical level, i.e., from the LP equations \eqref{eq:LP}. A heuristic derivation, reviewed in Section \ref{sec:TW} below, considers traveling wave solutions of \eqref{eq:LP} for non-zero velocity $v\in \R^3$, and expands the corresponding energy for small $v$. The existence of such solutions remains unclear, however, and we in fact conjecture that no such solutions exist for non-zero $v$. This is related to the fact the energy functional corresponding to \eqref{eq:LP} (given in Eq. \eqref{def:G} below) does not dominate the total momentum, and a computation of the ground state energy as a function of the (conserved) total momentum would simply yield a constant function (corresponding to an infinite effective mass). Due to the vanishing of the sound velocity in the medium, a moving electron can be expected to be slowed down to zero speed by emitting radiation. (See \cite{Fr1,Fr2,Fr3,Fr4,Fr5} for the study of a similar effect in a model of a classical particle coupled to a field.) 

In this paper, we provide a novel definition of the effective mass for the LP equations. We shall argue that all low energy states have a well-defined notion of (initial) velocity, and hence we can minimize the energy functional among states with given velocity. Expanding the resulting energy-velocity relation for small velocity gives a definition of the effective mass, which coincides with the prediction by Landau and Pekar \cite{landau1948effective}. 

\subsection{Structure of the paper}
 In Section \ref{sec:result}, we explain our rigorous approach to derive the energy-velocity relation of the system, allowing for a precise definition and computation of the effective mass. After introducing some notation and recalling fundamental properties of the Pekar energy functional in Section \ref{sec:prel},  we identify in Section \ref{sec:vel} a set of initial data for the LP equations for which it is possible  to define the position, and consequently the velocity, at any time. We then arrive at an energy-velocity relation by defining $E(v)$ in Section \ref{sec:initial} as the minimal energy among all admissible initial states of fixed initial velocity $v$. Finally, in Section \ref{sec:thm} we state our main result, an expansion of $E(v)$ for small velocities $v$, allowing for the computation the effective mass of the system.  

Section \ref{sec:proof-thm} contains the proof of our main result, Theorem \ref{thm:freephase}.  

In Section \ref{sec:approaches} we discuss the formal approach to the effective mass via traveling waves. Moreover, we investigate an alternative definition of the effective mass, through an alternative notion of velocity of low-energy states.

\section{Main Results}
\label{sec:result}

\subsection{Preliminaries}
\label{sec:prel} We start by introducing further notation and recalling some known results. 
The classical energy functional corresponding to the Landau--Pekar equations \eqref{eq:LP} is defined on $H^1(\Rtre)\times L^2(\Rtre)$ as 
\begin{equation}\label{def:G}
\G ( \psi, \varphi ) = \langle \psi, h_\varphi \psi \rangle + \| \varphi \|_2^2 \quad \text{for} \quad \|\psi\|_2=1 .
\end{equation}
Equipped with the symplectic form $\frac 1{2i} \int d\psi \wedge d\bar\psi + \frac{\alpha^2}{2i} \int d\varphi\wedge d\bar\varphi$, it defines a dynamical system leading to the LP equations \eqref{eq:LP}.  Moreover, $\G$ is conserved along solutions of \eqref{eq:LP}.

Let $e_{\rm P}$ denote the Pekar ground state energy 
\begin{equation}\label{def:ep}
 e_{\rm P} := \min \G(\psi,\varphi) \,.
\end{equation}
(For an estimation of its numerical value, see \cite{miyake}.) 
It was proved in \cite{lieb1977existence} that the minimum in \ref{def:ep} 
is attained for the Pekar minimizers $( \psi_{\rm P}, \varphi_{\rm P})$, which are radial smooth functions in $C^{\infty}(\Rtre)$ satisfying $\psi_{\rm P}>0$, $\varphi_{\rm P}=-\sigma_{\psi_{\rm P}}$ and $\psi_{\rm P}=\psi_{\varphi_{\rm P}}$, where  $\psi_{\varphi}$ denotes the ground state of $h_{\varphi}$ whenever it exists. Moreover, this minimizer is unique up to the symmetries of the problem, i.e., translation-invariance and multiplication of $\psi$ by a phase. 
We shall denote 
\begin{align}
\label{eq:def-HP}
H_{\rm P} = h_{\varphi_{\rm P}} - \mu_{\rm P}  \quad \text{with} \quad \mu_{\rm P}=\infspec h_{\varphi_{\rm P}} .
\end{align}  

Associated to $\G$, there are the two functionals 
\begin{align}
\label{eq:E}
\E(\psi):=\inf_{\varphi\in L^2(\Rtre)} \G(\psi,\varphi) = \int_{\R^3} |\nabla \psi(x)|^2 dx - \frac{ 1}{4\pi} \int_{\R^6} \frac{ |\psi(x)|^2 |\psi(y)|^2}{|x-y|} dx\, dy
\end{align}
and
\begin{align}
\label{eq:F}
 \F(\varphi):=\inf_{\psi\in H^1(\Rtre) \atop \|\psi\|_2=1} \G(\psi,\varphi) = \infspec h_\varphi + \|\varphi\|_2^2
\end{align}
%
and clearly $e_{\rm P}=\min \G(\psi,\varphi)=\min\E(\psi)=\min \F(\varphi)$. 
We also define the manifolds of minimizers 
\begin{align}
\label{eq:PekEn}
\minG:=\{(\psi,\varphi) \,|\, \G(\psi,\varphi)&=e_{\rm P}\}, \ \minE:=\left\{\psi \,|\, \E(\psi)=e_{\rm P}\right\}, \ \minF:=\left\{\varphi \,|\, \F(\varphi)=e_{\rm P}\right\}.
\end{align}
The results  in \cite{lieb1977existence} imply that we can write these in terms of the Pekar minimizers $(\psi_{\rm P}, \varphi_{\rm P})$ as 
\begin{align}
\minG=\{(e^{i\theta}& \psi_{\rm P}^y,\varphi_{\rm P}^y) \,|\, \theta \in [0,2\pi), \, y \in \Rtre \},\nonumber\\
\minE=\{&e^{i\theta} \psi_{\rm P}^y \,|\, \theta \in [0,2\pi), \, y \in \Rtre\},\nonumber\\
&\minF=\{\varphi_{\rm P}^y \,|\, y \in \Rtre\}
\end{align}
where  $f^y:=f(\,\cdot\,-y)$ for any function $f$.  Furthermore, it can be deduced from the results in  \cite{lenzmann2009uniqueness}  that the energy functionals $\F$ and $\E$ are both coercive (see \cite[Lemmas~2.6 and 2.7]{feliciangeli2021persistence}), i.e., there exists $C>0$ such that
\begin{align}
\label{eq:coercivity}
\F(\varphi)\geq e_{\rm P}+C\dist^2_{L^2}(\varphi,\minF), \quad \E(\psi)\geq e_{\rm P}+ C \dist^2_{H^1}(\psi,\minE).
\end{align}

The following Lemma on properties of the projection onto  the manifold $\minF$ will be important for our analysis below. Its proof will be given in Appendix \ref{app:projection}.

\begin{lem}
	\label{lem:minEFproj}
	There exists $\delta>0$ such that the $L^2$-projection onto $\minF$, is well-defined (i.e., unique) on 
	\begin{align}
	(\minF)_{\delta}&:=\{\varphi\in L^2(\Rtre)\;|\; \dist_{L^2}(\varphi, \minF)\leq \delta\} \; .
	\end{align}
	For any $\varphi \in (\minF)_{\delta}$,  we define $z_\varphi \in \R^3$ via 
	\begin{align}
	P_{L^2}^{\minF}(\varphi)&=\varphi_{\rm P}^{z_{\varphi}} \; .
	\end{align}
	Then $z_{\varphi}$ is a differentiable function from $(\minF)_{\delta}$ to $\Rtre$ and its partial derivative in the direction $\eta \in L^2(\R^3)$  is given by 
	\begin{align}
	 \partial_t z_{\varphi + t \eta} \restriction_{t=0} 
	 = A_\varphi^{-1} \langle \Re \eta \vert  \nabla \varphi_P^{z_\varphi} \rangle,
	\end{align} 
	where $A$ is the invertible matrix defined  for any $\varphi \in(\minF)_{\delta}$  by $A_{i,j} :=  - \Re \langle \varphi \vert \partial_i \partial_j \varphi_P^{z_\varphi} \rangle$. 
\end{lem}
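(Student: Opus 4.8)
The plan is to prove Lemma~\ref{lem:minEFproj} by combining the coercivity of $\F$ from \eqref{eq:coercivity} with the implicit function theorem applied to the first-order optimality conditions for the projection. First I would establish existence and uniqueness of the projection on a small neighborhood. Since $\minF = \{\varphi_{\rm P}^y \mid y\in\Rtre\}$ is a smooth three-dimensional manifold and $\varphi_{\rm P}$ is smooth with $\nabla\varphi_{\rm P}\in L^2$, the map $y\mapsto \varphi_{\rm P}^y$ is $C^\infty$ from $\Rtre$ to $L^2(\Rtre)$; moreover the distance $\dist_{L^2}(\varphi,\minF) = \inf_y \|\varphi - \varphi_{\rm P}^y\|_2$ is attained because $\|\varphi_{\rm P}^y\|_2 = \|\varphi_{\rm P}\|_2$ is constant so minimizing sequences in $y$ stay bounded (a run-away $|y|\to\infty$ forces $\|\varphi-\varphi_{\rm P}^y\|_2^2 \to \|\varphi\|_2^2 + \|\varphi_{\rm P}\|_2^2$, which exceeds the value at any fixed $y$ once $\varphi$ is close enough to $\minF$). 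For uniqueness, I would show the Hessian of $y\mapsto \|\varphi - \varphi_{\rm P}^y\|_2^2$ at a minimizer is positive definite when $\varphi$ is close to $\minF$: at $\varphi = \varphi_{\rm P}^z$ exactly, this Hessian equals $2\langle \partial_i\varphi_{\rm P}, \partial_j\varphi_{\rm P}\rangle$, which is positive definite (the three translates $\partial_i\varphi_{\rm P}$ are linearly independent since $\varphi_{\rm P}$ is radial and non-constant), and by continuity it stays positive definite, hence the minimizer is locally unique; a compactness/coercivity argument upgrades this to global uniqueness on $(\minF)_\delta$ for $\delta$ small.

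Next I would set up the implicit function theorem to get differentiability of $z_\varphi$ and the formula for its derivative. The minimizer $y = z_\varphi$ is characterized by the stationarity condition
\begin{align}
\label{eq:stationarity-plan}
F_i(\varphi, y) := \Re\langle \varphi - \varphi_{\rm P}^y \,\big|\, \partial_i \varphi_{\rm P}^y \rangle = 0, \qquad i = 1,2,3,
\end{align}
coming from $\partial_{y_i}\|\varphi-\varphi_{\rm P}^y\|_2^2 = -2\,\Re\langle \varphi - \varphi_{\rm P}^y \mid \partial_i\varphi_{\rm P}^y\rangle$ (using $\partial_{y_i}\varphi_{\rm P}^y = -(\partial_i\varphi_{\rm P})^y =: \partial_i\varphi_{\rm P}^y$ up to the sign bookkeeping, which I would fix carefully). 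The map $(\varphi,y)\mapsto F(\varphi,y)$ is jointly $C^1$ (indeed smooth in $y$, affine in $\varphi$). Its $y$-derivative is $\partial_{y_j} F_i = -\Re\langle \partial_j\varphi_{\rm P}^y \mid \partial_i\varphi_{\rm P}^y\rangle + \Re\langle \varphi - \varphi_{\rm P}^y \mid \partial_i\partial_j\varphi_{\rm P}^y\rangle$; at points of $(\minF)_\delta$ the second term is small (it vanishes on $\minF$) while the first is $-\langle \partial_i\varphi_{\rm P},\partial_j\varphi_{\rm P}\rangle$, which is invertible, so the full Jacobian in $y$ is invertible for $\delta$ small. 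Note that on $(\minF)_\delta$ this Jacobian is exactly $-A_\varphi$ with $A_{i,j} = -\Re\langle\varphi \mid \partial_i\partial_j\varphi_{\rm P}^{z_\varphi}\rangle$ once one uses $\Re\langle\varphi_{\rm P}^{z_\varphi}\mid\partial_i\partial_j\varphi_{\rm P}^{z_\varphi}\rangle = -\langle\partial_i\varphi_{\rm P},\partial_j\varphi_{\rm P}\rangle$; so I would check this identity (integration by parts) to identify the IFT Jacobian with $A_\varphi$ and conclude $A_\varphi$ is invertible. The implicit function theorem then yields that $z_\varphi$ is $C^1$ (in fact as smooth as the data allows) and, differentiating \eqref{eq:stationarity-plan} along $\varphi \mapsto \varphi + t\eta$ at $t=0$, gives
\begin{align}
0 = \Re\langle \eta \mid \partial_i\varphi_{\rm P}^{z_\varphi}\rangle + \sum_j (\partial_{y_j}F_i)\, (\partial_t z_{\varphi+t\eta}|_{t=0})_j,
\end{align}
i.e. $\partial_t z_{\varphi+t\eta}|_{t=0} = A_\varphi^{-1}\langle \Re\eta \mid \nabla\varphi_{\rm P}^{z_\varphi}\rangle$ after rearranging, matching the claimed formula.

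The main obstacle I anticipate is not any single hard estimate but rather the bookkeeping that makes the neighborhood uniform: one must choose a single $\delta>0$ that simultaneously (i) guarantees the minimizer exists and is unique for every $\varphi\in(\minF)_\delta$, (ii) keeps the perturbation term $\Re\langle\varphi-\varphi_{\rm P}^{z_\varphi}\mid\partial_i\partial_j\varphi_{\rm P}^{z_\varphi}\rangle$ small enough that $A_\varphi$ stays invertible, and (iii) is compatible with the local IFT neighborhoods patched along the whole manifold $\minF$ (which is non-compact, so one exploits translation invariance: the relevant quantities depend on $\varphi$ only through $\varphi(\cdot + z_\varphi)$, reducing everything to a neighborhood of the single point $\varphi_{\rm P}$). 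The quantitative closeness $\|\varphi - \varphi_{\rm P}^{z_\varphi}\|_2 = \dist_{L^2}(\varphi,\minF) \le \delta$ is exactly what feeds (ii) via Cauchy--Schwarz together with $\partial_i\partial_j\varphi_{\rm P}\in L^2$ (which follows from smoothness and decay of $\varphi_{\rm P}$, cf. \cite{lieb1977existence}). Once the translation-invariant reduction is in place, the IFT is applied at the single base point and transported, so the argument is clean; I would present it in that order: compactness/coercivity for existence, Hessian positivity for local uniqueness, global uniqueness by a contradiction/compactness argument, the integration-by-parts identity to identify $A_\varphi$, and finally the IFT for smoothness and the derivative formula.
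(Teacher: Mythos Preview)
Your proposal is correct and follows essentially the same route as the paper: apply the implicit function theorem to the stationarity condition for the $L^2$-projection, using translation invariance to reduce to a neighborhood of $\varphi_{\rm P}$. The paper's version is slightly more compact in that it writes $F_i(\varphi,z)=\Re\langle\varphi\mid\partial_i\varphi_{\rm P}^z\rangle$ directly (the term $\Re\langle\varphi_{\rm P}^z\mid\partial_i\varphi_{\rm P}^z\rangle$ you subtract vanishes anyway), so $\partial_{z_j}F_i=-\Re\langle\varphi\mid\partial_i\partial_j\varphi_{\rm P}^z\rangle=A_{i,j}$ appears immediately without your integration-by-parts identification; conversely, you are more explicit than the paper about the existence/uniqueness of the minimizer, which the paper essentially takes for granted.
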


\begin{rem} \label{rem:E} Likewise, it can be shown that the $H^1$- (resp. $L^2$-) projection onto $\minE$ have similar properties: There exists $\delta >0 $ such that the $H^1$- (resp. the $L^2$-) projection onto $\minE$ 
	\begin{align}
	\label{eq:proj-E}
	P_{H^1}^{\minE}(\psi)=e^{i\theta_{\psi}}\psi_{\rm P}^{y_{\psi}}, \quad\Big(\text{resp.}  \ P_{L^2}^{\minE}(\psi)&=e^{i\theta'_{\psi}}\psi_{\rm P}^{y'_{\psi}} \Big) 
	\end{align}
	is well-defined on the set $(\minE)^{H^1}_{\delta}:=\{\psi\in L^2(\Rtre)\;|\; \dist_{H^1}(\psi, \minE)\leq \delta\}$ (resp.  $(\minE)^{L^2}_{\delta}:=\{\psi\in L^2(\Rtre)\;|\; \dist_{L^2}(\psi, \minE)\leq \delta\}$) and the functions $y_\psi, \theta_\psi$ (resp.  $y'_\psi, \theta'_\psi$) defined through  \eqref{eq:proj-E} 
are differentiable functions from  $(\minE)^{H^1}_{\delta}$ (resp.  $(\minE)^{L^2}_{\delta}$)  to $\Rtre$ and $\mathbb{R} / (2\pi \mathbb{Z})$. 
\end{rem}

\subsection{Position and velocity of solutions} \label{sec:vel}

In this section, we give a meaning to the notion of position, and therefore velocity, for solutions of the Landau--Pekar equations (at least for a class of initial data). There is a natural way of defining, given $\psi_t$, the position of the electron at time $t$, which is simply given by 
\begin{align}
\label{eq:defXel}
\Pel(t):=\expval{x}{\psi_t}.
\end{align}
This yields, by straightforward computations using \eqref{eq:LP}, that
\begin{align}\label{vel}
\Vel(t):=\frac{d}{dt}\Pel(t)=2\expval{-i\nabla}{\psi_t}.
\end{align} 
Note that \eqref{vel} is always well-defined for $\psi\in H^1(\R^3)$, even although \eqref{eq:defXel} not necessarily is. 

For the phonon field, the situation is more complicated as $\varphi$ cannot be interpreted as a probability distribution over positions. This calls for a different approach. By \eqref{eq:coercivity}, Lemma \ref{lem:minEFproj} and the conservation of $\G$ along solutions of \eqref{eq:LP}, we know that there exists $\delta^*$ such that for any initial condition $(\psi_0,\varphi_0)$ such that 
\begin{align}
\label{eq:IDuniqueproj}
\G(\psi_0,\varphi_0)\leq e_{\rm P}+\delta^*,
\end{align}
$\varphi_t$ 
admits a unique $L^2$-projection $\varphi_{\rm P}^{z(t)}$ onto $\minF$ for all times.
We use this to define 
\begin{align}
\label{eq:defXph}
\Pph(t):=z(t), \quad \Vph:=\frac{d}{dt} \Pph(t)=\dot{z}(t).
\end{align}
Note that $\Pph(t)$ is indeed differentiable by Lemma \ref{lem:minEFproj} and the differentiability of the LP dynamics. At this point, for any initial data satisfying \eqref{eq:IDuniqueproj}, we have a well-defined notion of position and velocity for all times, admittedly in a much less explicit form for the phonon field.

\subsection{Initial conditions of velocity $v$}
\label{sec:initial}
 
For any $v\in \Rtre$ (or at least for $|v|$ sufficiently small), we are now interested in considering all initial conditions $(\psi_0,\varphi_0)$ whose solutions have instantaneous velocity $v$ at $t=0$ (both in the electron and in the phonon coordinate) and to then minimize the functional $\G$ over such states. This will give us an explicit relation between the energy and the velocity of the system, allowing us to define the effective mass of the polaron in the classical setting defined by the Landau--Pekar equations.

Note that by radial symmetry of the problem only the absolute value of the velocity, and not its direction, affects our analysis. Hence, for $v\in \R$, we consider  initial conditions $(\psi_0,\varphi_0)$ such that
\begin{itemize}
	\item[(i)] $(\psi_0, \varphi_0) \in H^1 (\R^3) \times L^2 ( \R^3) $ with $\| \psi_0\|_2=1$ and such that \eqref{eq:IDuniqueproj} is satisfied,
	\item[(ii)] $\Vel(0)=\Vph(0)=v (1,0,0)$. 
\end{itemize}
The set of admissible initial conditions of velocity $v\in \R$ can hence be compactly written as
\begin{align}
\label{eq:defI}
I_v:= \lbrace (\psi_0, \varphi_0) \, \vert \,  \text{(i), (ii) are satisfied} \rbrace.
\end{align}
We will show below that it is non-empty for small enough $v$. 

\subsection{Expansion of the energy} \label{sec:thm} In order to compute the effective mass of the polaron, we now minimize the energy $\G$ over the set $I_v$.
To this end, we define the energy
\begin{align}\label{def:Ev}
E(v) :=\inf_{(\psi_0,\varphi_0)\in I_v} \G(\psi_0,\varphi_0).
\end{align}
The following theorem gives an expansion of $E(v)$ for sufficiently small velocities $v$.  Its proof will be given in Section \ref{sec:proof-thm}.

\begin{theo}
	\label{thm:freephase}
	As $v \rightarrow 0$  we have 
	\begin{equation}
	\label{eq:thm}
	E(v)=e_{\rm P}+v^2 \left(  \frac{1}{4}+ \frac{\alpha^4}{3} \| \nabla  \varphi_{\rm P} \|_2^2 \right) + O(v^3).	
	\end{equation}
\end{theo}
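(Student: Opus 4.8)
The plan is to establish matching bounds
\[
e_{\rm P}+v^2\Big(\tfrac14+\tfrac{\alpha^4}{3}\|\nabla\varphi_{\rm P}\|_2^2\Big)-O(v^3)\ \le\ E(v)\ \le\ e_{\rm P}+v^2\Big(\tfrac14+\tfrac{\alpha^4}{3}\|\nabla\varphi_{\rm P}\|_2^2\Big)
\]
for $|v|$ small. Two elementary facts will be used repeatedly. First, since $(-\Delta)^{-1/2}$ preserves real-valuedness, $V_\varphi=V_{\Re\varphi}$ and $\sigma_\psi$ is real, which gives the decomposition
\[
\G(\psi,\varphi)=\langle\psi,-\Delta\psi\rangle+\langle\psi,V_{\Re\varphi}\psi\rangle+\|\Re\varphi\|_2^2+\|\Im\varphi\|_2^2=\G(\psi,\Re\varphi)+\|\Im\varphi\|_2^2 .
\]
Second, radiality of $\varphi_{\rm P}$ yields $\langle\partial_i\varphi_{\rm P},\partial_j\varphi_{\rm P}\rangle=-\langle\varphi_{\rm P},\partial_i\partial_j\varphi_{\rm P}\rangle=\tfrac13\|\nabla\varphi_{\rm P}\|_2^2\,\delta_{ij}$, so $\{\partial_i\varphi_{\rm P}\}_{i=1,2,3}$ is an orthogonal family in $L^2$.

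For the upper bound I would simply exhibit an explicit element of $I_v$: take $\psi_0:=e^{ivx_1/2}\psi_{\rm P}$ and $\varphi_0:=\varphi_{\rm P}+i\alpha^2 v\,\partial_1\varphi_{\rm P}$. Using $|\psi_0|^2=|\psi_{\rm P}|^2$, $\Re\varphi_0=\varphi_{\rm P}$, $\langle\psi_{\rm P},-i\partial_1\psi_{\rm P}\rangle=0$ and $\langle\varphi_{\rm P},\partial_1\varphi_{\rm P}\rangle=0$, a short computation gives $\G(\psi_0,\varphi_0)=e_{\rm P}+\tfrac14 v^2+\alpha^4 v^2\|\partial_1\varphi_{\rm P}\|_2^2=e_{\rm P}+v^2(\tfrac14+\tfrac{\alpha^4}{3}\|\nabla\varphi_{\rm P}\|_2^2)$; in particular this is $\le e_{\rm P}+\delta^*$ for small $v$, so condition (i) in \eqref{eq:defI} holds and $\dist_{L^2}(\varphi_0,\minF)\le\delta$ as well. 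For (ii): $\Vel(0)=2\langle\psi_0,-i\nabla\psi_0\rangle=v(1,0,0)$ is immediate, while $P^{\minF}_{L^2}(\varphi_0)=\varphi_{\rm P}$ so $z_{\varphi_0}=0$ and $A_{\varphi_0}=\tfrac13\|\nabla\varphi_{\rm P}\|_2^2\,\id$, and plugging $\Re\dot\varphi_0=\alpha^{-2}\Im\varphi_0=v\,\partial_1\varphi_{\rm P}$ (read off from \eqref{eq:LP}) into Lemma~\ref{lem:minEFproj} gives $\Vph(0)=A_{\varphi_0}^{-1}\langle v\,\partial_1\varphi_{\rm P}\mid\nabla\varphi_{\rm P}\rangle=v(1,0,0)$. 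This also shows $I_v\neq\emptyset$ for small $v$.

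For the lower bound, fix $(\psi_0,\varphi_0)\in I_v$. If $\G(\psi_0,\varphi_0)\ge e_{\rm P}+C_0v^2$ with the fixed constant $C_0:=\tfrac14+\tfrac{\alpha^4}{3}\|\nabla\varphi_{\rm P}\|_2^2+1$, the claimed bound is trivial for small $|v|$; otherwise $\F(\varphi_0)\le\G(\psi_0,\varphi_0)\le e_{\rm P}+C_0v^2$, so the coercivity \eqref{eq:coercivity} gives $\dist_{L^2}(\varphi_0,\minF)=O(v)$, in particular $\varphi_0\in(\minF)_\delta$. Now use $\G(\psi_0,\varphi_0)=\G(\psi_0,\Re\varphi_0)+\|\Im\varphi_0\|_2^2$ and bound the two terms separately. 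For the first, $\G(\psi_0,\Re\varphi_0)\ge\E(\psi_0)$, and the electron constraint $\langle\psi_0,-i\nabla\psi_0\rangle=\tfrac v2(1,0,0)$ together with the substitution $\tilde\psi_0:=e^{-ivx_1/2}\psi_0$ (for which $\|\tilde\psi_0\|_2=1$, $|\tilde\psi_0|=|\psi_0|$, $\langle\tilde\psi_0,-i\partial_1\tilde\psi_0\rangle=0$) gives $\E(\psi_0)=\tfrac14 v^2+\E(\tilde\psi_0)\ge e_{\rm P}+\tfrac14 v^2$. For the second, set $\zeta:=\varphi_0-\varphi_{\rm P}^{z_{\varphi_0}}$, so $\Im\varphi_0=\Im\zeta$, $\|\zeta\|_2=O(v)$, and $A_{\varphi_0}=\tfrac13\|\nabla\varphi_{\rm P}\|_2^2\,\id+O(v)$. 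By the definition \eqref{eq:defXph} of $\Vph$, the phonon constraint $\Vph(0)=v(1,0,0)$, and Lemma~\ref{lem:minEFproj} applied to $\eta=\dot\varphi_0$ (with $\Re\dot\varphi_0=\alpha^{-2}\Im\varphi_0$), we get $\langle\Im\varphi_0\mid\nabla\varphi_{\rm P}^{z_{\varphi_0}}\rangle=\alpha^2 v\,A_{\varphi_0}(1,0,0)$, whence, by Bessel's inequality for the orthogonal family $\{\partial_i\varphi_{\rm P}^{z_{\varphi_0}}\}$,
\[
\|\Im\varphi_0\|_2^2\ \ge\ \frac{3}{\|\nabla\varphi_{\rm P}\|_2^2}\,\big|\langle\Im\varphi_0\mid\nabla\varphi_{\rm P}^{z_{\varphi_0}}\rangle\big|^2\ =\ \frac{3\alpha^4 v^2}{\|\nabla\varphi_{\rm P}\|_2^2}\,\big|A_{\varphi_0}(1,0,0)\big|^2\ =\ \frac{\alpha^4}{3}\|\nabla\varphi_{\rm P}\|_2^2\, v^2+O(v^3).
\]
Adding the two estimates and taking the infimum over $I_v$ gives the lower bound; combined with the upper bound this proves \eqref{eq:thm}.

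The computations behind the upper bound are exact, and the decomposition $\G(\psi,\varphi)=\G(\psi,\Re\varphi)+\|\Im\varphi\|_2^2$ makes the lower bound need no expansion of the Pekar Hessian, so the bulk of the work is to make the error terms uniform in $(\psi_0,\varphi_0)\in I_v$. The step I expect to be the main obstacle is the rigorous handling of $\Vph(0)$ in the lower bound: one must justify differentiating $t\mapsto z_{\varphi_t}$ along the Landau--Pekar flow at $t=0$, translate the constraint $\Vph(0)=v(1,0,0)$ into the identity $\langle\Im\varphi_0\mid\nabla\varphi_{\rm P}^{z_{\varphi_0}}\rangle=\alpha^2 v\,A_{\varphi_0}(1,0,0)$ via Lemma~\ref{lem:minEFproj}, and expand $A_{\varphi_0}$ around the value $\tfrac13\|\nabla\varphi_{\rm P}\|_2^2\,\id$ it takes on $\minF$ with an $O(v)$ remainder controlled purely by $\dist_{L^2}(\varphi_0,\minF)$ (no re-centering of $\varphi_0$ being used). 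This last point requires $\partial_i\partial_j\varphi_{\rm P}\in L^2$, which follows from the regularity and decay of the Pekar minimizer.
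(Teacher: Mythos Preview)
Your proof is correct and takes a genuinely different, more elementary route than the paper's.

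\textbf{Upper bound.} Both approaches use the same phonon trial state $\varphi_{\rm P}+i\alpha^2 v\,\partial_1\varphi_{\rm P}$, but your electron trial state $e^{ivx_1/2}\psi_{\rm P}$ is the exact Galilean boost, while the paper uses $f_v\psi_{\rm P}+ig_v H_{\rm P}^{-1}\partial_1\psi_{\rm P}$. Since $H_{\rm P}^{-1}\partial_1\psi_{\rm P}=-\tfrac12 x_1\psi_{\rm P}$, the paper's choice is essentially the linearization of yours, which is why it carries $O(v^3)$ errors whereas yours hits the target value exactly.

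\textbf{Lower bound.} Here the arguments diverge substantially. The paper expands $\G$ to second order around $(\psi_{\rm P},\varphi_{\rm P})$, invokes coercivity of \emph{both} $\E$ and $\F$ to show $\|\psi_0-\psi_{\rm P}\|_{H^1}=O(v)$ and $\|\varphi_0-\varphi_{\rm P}\|_2=O(v)$ (after re-centering), discards the nonnegative Hessian block $Q(H_{\rm P}-4X_{\rm P})Q$, and then completes the square in $H_{\rm P}$ to extract the $v^2/4$ term. You bypass all of this via the clean splitting $\G(\psi,\varphi)=\G(\psi,\Re\varphi)+\|\Im\varphi\|_2^2$ together with the Galilean-boost identity $\E(\psi_0)=\E(e^{-ivx_1/2}\psi_0)+v^2/4\ge e_{\rm P}+v^2/4$, which is exact and uses only the constraint $\langle\psi_0,-i\partial_1\psi_0\rangle=v/2$. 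Consequently you need neither the coercivity of $\E$, nor the operator $H_{\rm P}-4X_{\rm P}$, nor any re-centering of $\psi_0$. The phonon part is handled essentially as in the paper (projection onto $\spn\{\nabla\varphi_{\rm P}^{z_{\varphi_0}}\}$ and expansion of $A_{\varphi_0}$), with the only analytic input being coercivity of $\F$ and $\partial_i\partial_j\varphi_{\rm P}\in L^2$; the latter is indeed implicit already in the statement of Lemma~\ref{lem:minEFproj}. What the paper's longer expansion buys is a more detailed picture of the Hessian of $\G$ near $\minG$, which is reused in Section~\ref{sec:approaches}; for Theorem~\ref{thm:freephase} itself, your argument is shorter and sharper.
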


Since the kinetic energy of a particle of mass $m$ and velocity $v$ equals $m v^2/2$, \eqref{eq:thm} identifies the effective mass of the system as
\begin{align}
\label{eq:OurEffMass}
m_{\text{eff}}= \lim_{v \rightarrow 0} \frac{E(v) - e_{\rm P}}{v^2/2} = \frac 1 2 +\frac {2\alpha^4} 3 \| \nabla \varphi_{\rm P} \|_2^2  .
\end{align}
The first term $1/2$ is simply the bare mass of the electron in our units, while the second term $\frac {2\alpha^4} 3 \| \nabla \varphi_{\rm P} \|_2^2$ corresponds to the additional mass acquired through the interaction with the  phonon field. It agrees with the prediction in \cite{landau1948effective}, and is conjectured to coincide with the effective mass in the Fr\"ohlich model in the limit $\alpha \to \infty$.  Note that since $(-\Delta)^{1/2} \varphi_{\rm P} = -|\psi^{\rm P}|^2$, $\|\nabla \varphi_{\rm P}\|_2 = \|\psi^{\rm P}\|_4^2$, which can be evaluated numerically \cite{miyake}. 

\begin{rem}[Traveling waves]
\label{rem:TW}
The heuristic computations contained in the physics literature concerning $m_{\text{eff}}$ \cite{alexandrov2010advances,landau1948effective}
all rely, in one way or another, on the existence of traveling wave solutions of the LP equations of velocity $v$ (at least for sufficiently small velocity), i.e. solutions with initial data $(\psi_v,\varphi_v)$ such that 
\begin{align}
(\psi_t(x),\varphi_t(x))=(e^{-ie_{v}t}\psi_v(x-vt),\varphi_v(x-vt))
\end{align}
for suitable $e_v \in \R$. 
Such solutions would allow to define the energy of the system at velocity $v$ as $E^{\text{TW}}(v) = \G( \psi_v, \varphi_v)$, and a perturbative calculation  (discussed in Section \ref{sec:TW} below) yields indeed
\begin{align}
\label{eq:mass-predTW}
\lim_{v \rightarrow 0} \frac{E^{\text{TW}}(v) - e_{\rm P}}{v^2/2} = \frac 1 2 +\frac {2\alpha^4} 3 \| \nabla \varphi_{\rm P} \|^2, 
\end{align} 
in agreement with \eqref{eq:OurEffMass}. 
 Unfortunately, this approach turns out to be only formal, and we conjecture traveling wave solutions to not exist for any $\alpha>0$, $v>0$, as explained in more detail in Section~\ref{sec:TW}. 
\end{rem}

\begin{rem} \label{rem:AlternatePos}
In Section \ref{sec:vel}, we used the standard approach from quantum mechanics to define the electron's position \eqref{eq:defXel} and velocity  \eqref{vel}. We could, instead, use also for the electron a similar approach to the one we use for the phonon field (i.e. \eqref{eq:defXph}) through the projection onto the manifold of minimizers $\minE$.  A natural question is whether one  obtains the same effective mass using this different notion of position. In Section \ref{sec:diff-vel}, we show that, in fact, this alternate definition yields a different effective mass equal to 
\begin{align}
\widetilde{m}_{\text{eff}} = \frac{2 \| \nabla \psi_{\rm P} \|_2^4}{3 \|\nabla \varphi_{\rm P} \|_2^2 } +\frac{2\alpha^4}{3} \| \nabla \varphi_{\rm P} \|_2^2  .
\end{align}
This coincides with \eqref{eq:OurEffMass} and \eqref{eq:mass-predTW} for large $\alpha$ (hence still confirming the prediction in \cite{landau1948effective}), but differs  in the $O(1)$ term. In fact, as we discuss in Section \ref{sec:diff-vel}, one has $\widetilde{m}_{\text{eff}} < m_{\text{eff}}$. 
\end{rem}

\section{Proof of Theorem~\ref{thm:freephase}}\label{sec:proof-thm}

%

 Let us denote $\delta_1=\psi_0-\psi_{\rm P}$ and $\delta_2=\varphi_0-\varphi_{\rm P}$. Expanding $\G$ in \eqref{def:G} and using that $\varphi_{\rm P}=-\sigma_{\psi_{\rm P}}$ we find
\begin{align}
\G(\psi_0,\varphi_0) & =\G(\psi_{\rm P}+\delta_1,\varphi_{\rm P}+\delta_2) \nonumber\\
& =e_{\rm P}+2\bra{\psi_{\rm P}}h_{\varphi_{\rm P}}\ket{\Re \delta_1}\nonumber\\
&\quad +\expval{h_{\varphi_{\rm P}}}{\delta_1}+2\bra{\Re \delta_1}V_{\delta_2}\ket{\psi_{\rm P}}+\|\delta_2\|_2^2 +\expval{V_{\delta_2}}{\delta_1}. \label{eq:expansion-G}
\end{align}
Since $\psi_0$ is normalized, we have
\begin{align}\label{pdd}
1=\|\psi_0\|_2^2=\|\psi_{\rm P}+\delta_1\|_2^2=1+\|\delta_1\|_2^2+2\bra{\psi_{\rm P}}\ket{\Re \delta_1}\iff 2\bra{\psi_{\rm P}}\ket{\Re \delta_1}=-\|\delta_1\|_2^2.
\end{align}
Hence
\begin{align}
2\bra{\psi_{\rm P}}h_{\varphi_{\rm P}}\ket{\Re \delta_1}=2\mu_{\rm P}\bra{\psi_{\rm P}}\ket{\Re \delta_1}=-\mu_{\rm P}\|\delta_1\|_2^2,
\end{align}
and  using $\| V_{\delta_2} \delta_1 \|_2 \leq C \| \delta_2\|_2 \; \|\delta_1 \|_{H^1}$ (see, e.g., \cite[Lemma III.2]{LRSS}) we arrive at
\begin{align}
\G(\psi_0,\varphi_0)=e_{\rm P}+\expval{H_{\rm P}}{\delta_1}+2\bra{\Re \delta_1}V_{\delta_2}\ket{\psi_{\rm P}}+\|\delta_2\|_2^2+O(\|\delta_2\|_2 \|\delta_1\|_{H^1}^2).
\end{align}
By completing the square, we have
\begin{align}
&\|\Re\delta_2\|_2^2+2\bra{\Re \delta_1}V_{\delta_2}\ket{\psi_{\rm P}}\nonumber\\
&=\|\Re \delta_2 +2 
(-\Delta)^{-1/2} (\psi_{\rm P}\Re \delta_1)\|_2^2-4 
\expval{\psi_{\rm P}(-\Delta)^{-1} \psi_{\rm P}}{\Re \delta_1}
\end{align}
and therefore
\begin{align}\label{36}
\G(\psi_0,\varphi_0) &= e_{\rm P}+\expval{H_{\rm P}}{\Im\psi_0}+\|\Im \varphi_0\|_2^2 \nonumber \\ & \quad +\|\Re \delta_2 +2 
(-\Delta)^{-1/2} (\psi_{\rm P}\Re \delta_1)\|_2^2\nonumber\\
&\quad +\expval{H_{\rm P}-4X_{\rm P}}{\Re \delta_1}+O(\|\delta_2\|_2 \|\delta_1\|^2_{H^1}) ,
\end{align} 
where $X_{\rm P}$ is the operator with integral kernel $X_{\rm P}(x,y):= 
\psi_{\rm P}(x)(-\Delta)^{-1}(x,y) \psi_{\rm P}(y)$. Since $X_{\rm P}$ is bounded, and $\| P_{\psi_{\rm P}} \Re\delta_1\| = \|\delta_1\|_2^2/2$ by \eqref{pdd} (with $P_{\psi_{\rm P}} = |\psi_{\rm P}\rangle\langle \psi_{\rm P}|$ the rank one projection onto $\psi_{\rm P}$), 
we also have 
%
\begin{align}
\G(\psi_0,\varphi_0)& =e_{\rm P}+\expval{H_{\rm P}}{\Im\psi_0}+\|\Im\varphi_0\|_2^2 \nonumber \\ & \quad +\|\Re \delta_2 +2
(-\Delta)^{-1/2} (\psi_{\rm P}\Re \delta_1)\|_2^2\nonumber\\
 &\quad+\expval{Q(H_{\rm P}-4X)Q}{\Re \delta_1}+O(\|\delta_2\|_2\|\delta_1\|^2_{H^1})+O(\|\delta_1\|_{L^2}^3) , \label{eq:expressionG}
\end{align}
where $Q=\unit-P_{\psi_{\rm P}}$.

\textbf{Upper Bound:} For sufficiently small $v$,  we use as a trial state
\begin{align}
\label{eq:trialstate}
(\bar\psi_0,\bar\varphi_0)=\left( f_v \psi_{\rm P}+i g_v H_{\rm P}^{-1}\partial_1 \psi_{\rm P} , \; \varphi_{\rm P}+iv\alpha^2 \partial_1 \varphi_{\rm P}\right)
\end{align}
with $f_v, g_v >0$ given by 
\begin{align}
\label{eq:defFG}
f_v^2 := \frac{2 v^2 \|H_{\rm P}^{-1}\partial_1 \psi_{\rm P}\|_2^2 }{1 - \sqrt{1-4v^2 \|H_{\rm P}^{-1}\partial_1 \psi_{\rm P}\|_2^2 } }, \;  g_v^2 := \frac{ 1 - \sqrt{1-4v^2 \|H_{\rm P}^{-1}\partial_1 \psi_{\rm P}\|_2^2 } }{2 \|H_{\rm P}^{-1}\partial_1 \psi_{\rm P}\|_2^2 }  .
\end{align}
Note that $\partial_1 \psi_{\rm P}$ is orthogonal to $\psi_{\rm P}$, hence $H_{\rm P}^{-1} \partial_1 \psi_{\rm P}$ is well-defined. 
We begin by showing that \eqref{eq:trialstate} is an element of the set of admissible initial data $I_v$ in \eqref{eq:defI}. To prove that $(\bar\psi_0,\bar\varphi_0)$ satisfies (i), we only need to check that $\bar\psi_0$ is normalized (which follows easily from \eqref{eq:defFG}) as the  condition \eqref{eq:IDuniqueproj}  will follow a posteriori from the energy bound we shall derive. We now proceed to show that $(\bar\psi_0,\bar\varphi_0)$ satisfies (ii). For the electron, using that  $H_{\rm P}^{-1} \partial_j \psi_{\rm P}= -x_j \psi_{\rm P} /2  $ (which can be checked by applying $H_P$ and using that $[H_P,x_1]=-2 \partial_1$) and  consequently that 
\begin{align}
\label{eq:sandwich}
\bra{\partial_i \psi_{\rm P}}H_{\rm P}^{-1}\ket{\partial_j \psi_{\rm P}}=\delta_{ij}/ 4 
\end{align} 
since $\psi_{\rm P}$ is radial, 
we can conclude that
\begin{align}
\label{eq:trial-v}
-2\expval{i \partial_j }{\bar \psi_0}=4 f_v g_v  \langle H_{\rm P}^{-1} \partial_1 \psi_{\rm P} \vert \partial_j \psi_{\rm P} \rangle  =v \delta_{j1}  ,
\end{align}
i.e., that $\Vel(0)=v(1,0,0) $, as required.

For the phonons, we first note that $\Pph(0)=0$, since $\Re \bar{\varphi}_0 = \varphi_{\rm P}$. Next,  we derive a relation for the velocity of the phonons $\Vph(t) = \dot{z}(t)$ in terms of their position $\Pph (t) =z(t)$ for general time $t$.  Since 
\begin{align}
\label{eq:defZ}
\min_z\|\varphi_t -\varphi_{\rm P}^z\|_2^2=\|\varphi_t-\varphi_{\rm P}^{z(t)}\|_2^2,
\end{align}
the position $z(t)$ necessarily  has to satisfy 
\begin{align}
\label{eq:orthPhon}
\Re \bra{\varphi_t}\ket{(u\cdot \nabla) \varphi_{\rm P}^{z(t)}}=0 \quad \text{for all} \quad u \in \mathbb{S}^2 \iff \Re \varphi_t \perp \spn\{\nabla \varphi_{\rm P}^{z(t)}\}.
\end{align}
Differentiating \eqref{eq:orthPhon} w.r.t. time, using \eqref{eq:LP} and evaluating the resulting expression at $t=0$,  we arrive at 
\begin{align}
\label{eq:defVph2}
0=&\Re\bra{-i\alpha^{-2} (u\cdot \nabla)(\bar\varphi_0+\sigma_{\bar\psi_0})}\ket{\varphi_{\rm P}}-\Re\bra{(\dot{z}(0)\cdot\nabla)\bar\varphi_0}\ket{(u\cdot \nabla)\varphi_{\rm P}}\nonumber\\
=&\bra{-\alpha^{-2} \Im\bar\varphi_0}\ket{(u\cdot \nabla)\varphi_{\rm P}}-\bra{(\dot{z}(0)\cdot\nabla)\Re \bar\varphi_0}\ket{(u\cdot \nabla)\varphi_{\rm P}}\nonumber\\
=&-\bra{v\partial_1 \varphi_{\rm P}}\ket{(u\cdot \nabla)\varphi_{\rm P}}-\bra{(\dot{z}(0)\cdot\nabla)\varphi_{\rm P}}\ket{(u\cdot \nabla)\varphi_{\rm P}},
\end{align}
which the velocity $\dot{z} (0)$ has to satisfy for all $u\in \mathbb{S}^2$, given its position $\Pph(0)=z (0)=0$.  By invertibility of the coefficient matrix,  \eqref{eq:defVph2} has the unique solution $\dot{z}(0)=v(1,0,0)$, and we indeed conclude that $\Vph (0) = v (1,0,0)$.

We now evaluate $\G(\bar\psi_0,\bar\varphi_0)$. Since $f_v = 1 + O(v^2), g_v = v+ O(v^3)$, using \eqref{eq:expressionG} and \eqref{eq:sandwich} we find
\begin{align}
E(v)\leq \G(\bar\psi_0,\bar\varphi_0)=e_{\rm P}+v^2 \left(  \frac 1 4+\alpha^4 \| \partial_1 \varphi_{\rm P} \|_2^2 \right) + O(v^3)
\end{align}
verifying on the one hand \eqref{eq:IDuniqueproj} for sufficiently small $v$, and on the other hand the r.h.s. of \eqref{eq:thm} as an upper bound on $E(v)$ (using that $\varphi_{\rm P}$ is radial).

\textbf{Lower Bound:} We first observe that to derive a lower bound on $E(v)$ we can w.l.o.g. restrict to initial conditions $(\psi_0,\varphi_0)$  satisfying  additionally
\begin{align}
\label{eq:centeringI}
P_{L^2}^{\minE}(\psi_0)>0, \\
\label{eq:centeringII} \Pph(0)=0.
\end{align} 
This simply follows from the invariance of $\G$ under translations (of both $\psi$ and $\varphi$) and under changes of phase of $\psi$.  
Moreover, by the upper bound derived in the first step of this proof and the coercivity of $\E$ and $\F$ in \eqref{eq:coercivity}, we conclude that it is sufficient to minimize over elements of $I_v$ such that $\dist_{H^1}(\psi_0, \minE)=O(v)=\dist_{L^2}(\varphi_0,\minF)$  for small $v$.  Since the $L^2$-projection of $\varphi_0$ is $\varphi_{\rm P}$ by \eqref{eq:centeringII}, it immediately follows that $\|\delta_2\|_2=O(v)$. We now proceed to show that necessarily also $\|\delta_1\|_{H^1}=O(v)$. 
Let $y',y\in \Rtre$ and $\theta \in[0,2\pi)$ be such that 
\begin{align}
\label{eq:defTheta}
P_{L^2}^{\minE}(\psi_0)=\psi_{\rm P}^{y'}, \quad P_{H^1}^{\minE}(\psi_0)=e^{i\theta}\psi_{\rm P}^y,
\end{align}
where we recall that the $L^2$-projection is assumed to be positive by \eqref{eq:centeringI}. Combining the upper bound derived in the first step with  \cite[Eq.~(53)]{feliciangeli2021persistence}, we get
\begin{align}
\| \varphi_0 - \varphi_{\rm P}^y \|_2^2 \leq C \left( \G (\psi_0, \varphi_0)  - e_{\rm P} \right)\leq Cv^2.
\end{align}
There exist $\delta, C_1, C_2>0$ such that  
\begin{align}
\| \varphi_{\rm P} - \varphi_{\rm P}^y \|_2\geq\begin{cases}
C_1\vert y \vert \| \nabla \varphi_{\rm P} \|_2, & |y|\leq \delta\\
C_2 & |y|>\delta
\end{cases},
\end{align}
and this allows to conclude that $|y| = O(v)$. In other words,  assuming  centering w.r.t. to translations in the phonon coordinate (i.e. \eqref{eq:centeringII}) forces, at low energies, also the centering w.r.t. translations in the electron coordinate, at least approximately. At this point, it is also easy to verify that $\theta=O(v)$ (and, as an aside, that $|y'|=O(v)$), since, by the upper bound derived in the first step and the coercivity of $\E$, we have
\begin{align}
\|\psi_{\rm P}^{y'}-e^{i\theta} \psi_{\rm P}^y\|_2\leq \|\psi_{\rm P}^{y'}-\psi_0\|_{2}+\|e^{i\theta} \psi_{\rm P}^y-\psi_0\|_{2}=O(v).
\end{align}
In particular, we conclude that
\begin{align}\label{aa}
\| \delta_1 \|_{H^1} \leq \| e^{i\theta}\psi_{\rm P}^{y} - \psi_0 \|_{H^1}+ \| \psi_{\rm P} - e^{i\theta}\psi_{\rm P}^{y} \|_{H^1} = O(v).
\end{align} 

Using again \eqref{eq:expressionG} and that $Q(H_{\rm P}-4X_{\rm P})Q\geq 0$, we conclude that for any $(\psi_0,\varphi_0)\in I_v$ satisfying 
\eqref{eq:centeringI} and \eqref{eq:centeringII}, as well as $\G(\psi_0,\varphi_0) \leq e_{\rm P} + O(v^2)$, we have
\begin{align}
\label{eq:firstLB}
\G(\psi_0,\varphi_0)\geq e_{\rm P}+\expval{H_{\rm P}}{\Im\psi_0}+\|\Im\varphi_0\|_2^2+O(v^3).
\end{align}
By arguing as in \eqref{eq:defVph2}, we see that the conditions $\Pph(0)=0$ and $\Vph(0)=v$ imply that 
%
\begin{align}
P_{\nabla \varphi_{\rm P}} (\Im \varphi_0 + v\alpha^2 \partial_1 \Re\varphi_0)=0 ,
\end{align}
where $P_{\nabla \varphi_{\rm P}}$ denotes the projection onto the span of $\partial_j \varphi_{\rm P}$, $1\leq j\leq 3$. 
Since $P_{\nabla \varphi_{\rm P}} \partial_1$ is a bounded operator, and $\|\delta_2\|_2=O(v)$, we find 
\begin{equation}\label{325}
\|\Im\varphi_0\|_2^2 \geq \| P_{\nabla \varphi_{\rm P}} \Im\varphi_0\|_2^2 = v^2 \alpha^4 \| \partial_1 \varphi_{\rm P} + P_{\nabla \varphi_{\rm P}} \partial_1 \Re \delta_2 \|_2^2 \geq v^2 \alpha^4 \| \partial_1 \varphi_{\rm P}\|_2^2 - O(v^3) .
\end{equation}
%
We are left with giving a lower bound on $\expval{H_{\rm P}}{\Im\psi_0}$,  
under the condition that   
\begin{align}
2\expval{-i\nabla}{\psi_0}=4\bra{\Im \psi_0}\ket{\nabla\Re\psi_0}=v(1,0,0).
\end{align}
We already argued in \eqref{aa} that  $\|\psi_0-\psi_{\rm P}\|_{H^1}=O(v)$, and therefore
\begin{align}
\label{eq:constranit-Im}
4\bra{\Im \psi_0}\ket{\nabla\psi_{\rm P}}=v(1,0,0)+O(v^2).
\end{align}
Completing the square, we find 
\begin{align}
\label{eq:completing-square}
\expval{H_{\rm P}}{\Im\psi_0} & = \expval{H_{\rm P}^{-1}}{H_{\rm P}\Im\psi_0 - v \partial_1 \psi_{\rm P}} \notag \\
& \quad + 2v \bra{\Im\psi_0}\ket{\partial_1  \psi_{\rm P}} - v^2 \expval{H_{\rm P}^{-1}}{ \partial_1  \psi_{\rm P}} \notag \\
&  \geq  2v \bra{\Im\psi_0}\ket{\partial_1  \psi_{\rm P}} - v^2 \expval{H_{\rm P}^{-1}}{ \partial_1 \psi_{\rm P}} .
\end{align}
From the constraint \eqref{eq:constranit-Im} and \eqref{eq:sandwich}, it thus follows that
\begin{align}\label{329}
\expval{H_{\rm P}}{\Im\psi_0} &\geq  \frac{v^2}{4}  + O(v^3).
\end{align}
By combining \eqref{eq:firstLB}, \eqref{325} and \eqref{329}, we arrive at the final lower bound 
\begin{align}
E(v) \geq e_{\rm P}+ v^2\left(\frac 1 4 + \alpha^4 \|\partial_1\varphi_{\rm P}\|_2^2\right)+O(v^3) .
\end{align}
Again,  since $\varphi_{\rm P}$ is radial, this is of the desired form, and hence the proof is complete.
%
%
\hfill\qed

\section{Further Considerations}
\label{sec:approaches}

In this section, we carry out the details related to Remarks \ref{rem:TW} and \ref{rem:AlternatePos}. 

\subsection{Effective mass through traveling wave solutions}
\label{sec:TW}

A possible way of formalizing the derivation of the effective mass in \cite{alexandrov2010advances,landau1948effective}  relies on traveling wave solutions of the Landau--Pekar equations.  
A traveling wave of velocity $v\in \Rtre$ 
is a solution $(\psi_t, \varphi_t)$ of \eqref{eq:LP} 
of the form 
	\begin{align}
	\label{eq:TWdef}
	(\psi_t, \varphi_t) = (e^{-i e_v t} \psi_v^{\text{TW}} ( \cdot - vt ), \varphi_v^{\text{TW}} ( \cdot - vt ) ) 
 \end{align} 
 for all $t\in \R$, with $e_v\in \R$ defining a suitable phase factor.  As before,  by rotation invariance we can restrict our attention to velocities of the form $v (1,0,0)$ with $v\in \R$ in the following.
 
Note that in the case $\alpha =0$, where $\varphi_t = - \sigma_{\psi_t}$ for all $t\in\R$, the LP equations simplify to a non-linear Schr\"odinger equation (also known as Choquard equation). In this case,  a traveling wave is given by $\psi_v^{\text{TW}} (x)= e^{i x_1 v /2 } \psi_{\rm P} ( x)$ with $e_v= \mu_{\rm P} + \frac{v^2}{4} $, and its energy can be computed to be 
\begin{align}
\G\left( \psi_v^{\text{TW}}, -\sigma_{\psi_v^{\text{TW}}} \right) = e_{\rm P} + \frac{v^2}{4}, \label{eq:Ev-expansion-alpha-0}
\end{align}
yielding an effective mass $m=1/2$ at $\alpha=0$. For the case $\alpha > 0$, on the other hand, we conjecture that there are no traveling wave solutions of the form \eqref{eq:TWdef}.  
 
 \begin{Conj}
 For $\alpha>0$, there are no solutions to the LP equations \eqref{eq:LP} of the form \eqref{eq:TWdef} with $v\neq 0$.
 \end{Conj}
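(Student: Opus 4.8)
The plan is to translate the existence of a traveling wave into a self‑consistent stationary elliptic system, to extract from the phonon equation a ``radiation obstruction'' in Fourier space, and then to show that this obstruction is incompatible with the regularity forced by the electron equation. \emph{Step 1 (reduction).} Inserting \eqref{eq:TWdef} into \eqref{eq:LP} and using that $-\Delta$, $\varphi\mapsto V_\varphi$ and $\psi\mapsto\sigma_\psi$ all commute with translations, the second equation of \eqref{eq:LP} becomes $(1+iv\alpha^2\partial_1)\varphi_v^{\text{TW}}=-\sigma_{\psi_v^{\text{TW}}}$ and the first becomes $h_{\varphi_v^{\text{TW}}}\psi_v^{\text{TW}}+iv\partial_1\psi_v^{\text{TW}}=e_v\psi_v^{\text{TW}}$; the gauge transformation $\psi_v^{\text{TW}}=e^{ivx_1/2}\tilde\psi_v$ removes the first‑order term and yields $h_{\varphi_v^{\text{TW}}}\tilde\psi_v=\lambda\tilde\psi_v$ with $\lambda=e_v+v^2/4$, $\|\tilde\psi_v\|_2=1$ and $|\psi_v^{\text{TW}}|^2=|\tilde\psi_v|^2$. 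Since $V_\varphi=2(-\Delta)^{-1/2}\Re\varphi$ depends only on $\Re\varphi$, splitting the phonon equation into real and imaginary parts gives $\Im\varphi_v^{\text{TW}}=-v\alpha^2\partial_1\Re\varphi_v^{\text{TW}}$ together with the scalar equation $(1+v^2\alpha^4\partial_1^2)\Phi=-\sigma_{\tilde\psi_v}$ for $\Phi:=\Re\varphi_v^{\text{TW}}$. Hence a traveling wave of velocity $v\neq 0$ exists if and only if there exist $\tilde\psi_v\in H^1(\Rtre)$ with $\|\tilde\psi_v\|_2=1$, $\Phi\in L^2(\Rtre)$ with $\partial_1\Phi\in L^2(\Rtre)$, and $\lambda\in\R$, solving $(-\Delta+2(-\Delta)^{-1/2}\Phi)\tilde\psi_v=\lambda\tilde\psi_v$ and $(1+v^2\alpha^4\partial_1^2)\Phi=-\sigma_{\tilde\psi_v}$.

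\emph{Step 2 (the obstruction).} The operator $1+v^2\alpha^4\partial_1^2$ is of Helmholtz type in the variable $x_1$: its Fourier symbol $1-v^2\alpha^4k_1^2$ vanishes on the two hyperplanes $\Pi_\pm:=\{k\in\Rtre:k_1=\pm(v\alpha^2)^{-1}\}$, so its inverse is unbounded on $L^2(\Rtre)$. Consequently $\Phi\in L^2$ forces $\widehat{\sigma_{\tilde\psi_v}}$, and hence --- since $\widehat{\sigma_\psi}(k)=|k|^{-1}\widehat{|\psi|^2}(k)$ with $|k|\neq 0$ on $\Pi_\pm$ --- also $\widehat{|\tilde\psi_v|^2}$, to vanish on $\Pi_\pm$. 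The hyperplanes $\Pi_\pm$ are precisely the resonance between the dispersionless phonon frequency $\alpha^{-2}$ and the Doppler frequency $vk_1$ of the co‑moving source, and the obstruction expresses the fact that a traveling wave can exist only if the electron density does not excite these radiating modes; this is the stationary counterpart of the radiation‑damping mechanism studied for a classical particle coupled to a field in \cite{Fr1,Fr2,Fr3,Fr4,Fr5}.

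\emph{Step 3 (contradiction).} Near $k=0$ one has $\widehat{V_\Phi}(k)=2|k|^{-1}\widehat{\Phi}(k)\sim-2|k|^{-2}\widehat{|\tilde\psi_v|^2}(0)=-2|k|^{-2}$, so $V_\Phi$ has an attractive Coulomb tail; thus $h_\Phi$ is of Coulomb‑plus‑short‑range type, $\lambda<0$ (in particular for small $v$, where $\lambda\to\mu_{\rm P}$), and a standard elliptic bootstrap makes $\tilde\psi_v$ smooth, exponentially decaying and real‑analytic in a strip, so that $\widehat{|\tilde\psi_v|^2}$ is analytic in a tube $\{|\Im k|<\beta\}$ and decays exponentially along real directions. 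It then remains to contradict the vanishing of this analytic function on $\Pi_\pm$, and I would split the velocity range. (a) For $v$ large, $\Pi_\pm$ lies at the small wavenumber $\pm(v\alpha^2)^{-1}$, and since $\widehat{|\tilde\psi_v|^2}$ is $C^1$ near the origin with value $1$ there, it cannot vanish on $\Pi_\pm$ once $(v\alpha^2)^{-1}<(\int|x_1|\,|\tilde\psi_v(x)|^2\,dx)^{-1}$, which reduces the large‑$v$ case to a uniform bound on that moment. (b) For $v$ small, carry out a Lyapunov--Schmidt reduction of the stationary system around the Pekar minimizer $(\psi_{\rm P},\varphi_{\rm P})$ --- whose linearization is governed by $H_{\rm P}$ on the complement of the symmetry zero modes and is invertible there by \eqref{eq:coercivity} --- so that the resonance condition becomes the finite‑dimensional reduced equation, whose leading term is $\widehat{|\psi_{\rm P}|^2}$ evaluated at $|k|\sim(v\alpha^2)^{-1}$: nonzero, but exponentially small in $1/v$.

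\emph{Main obstacle.} The step I expect to be hardest is 3(b): the obstruction quantity lies beyond all orders of perturbation theory (it is exponentially small in $v$), so one must control the bifurcating branch in an analytic or Gevrey norm sharp enough both to dominate $e^{-c/v}$ and to exclude an exact cancellation of the Pekar contribution by the self‑generated high‑frequency tail --- a Stokes/exponential‑asymptotics analysis of the kind underlying the results of \cite{Fr1,Fr2,Fr3,Fr4,Fr5}. The intermediate range of velocities, where neither the large‑$v$ moment argument nor the small‑$v$ bifurcation applies directly, is a further gap, and would presumably require a softer argument --- for instance a unique‑continuation statement showing that vanishing of $\widehat{|\tilde\psi_v|^2}$ on a hyperplane propagates through the eigenvalue equation to infinite‑order vanishing, and hence to $\tilde\psi_v\equiv 0$. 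These difficulties are the reason the statement is formulated as a conjecture.
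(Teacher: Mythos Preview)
The paper does not prove this statement: it is labeled a \emph{Conjecture}, and the surrounding text offers only a physical motivation (vanishing sound velocity, radiation damping, with a pointer to \cite{Fr1,Fr2,Fr3,Fr4,Fr5}) before moving on to the \emph{formal} traveling-wave computation under the assumption that such solutions exist. There is nothing in the paper to compare your argument to.

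Your proposal, by contrast, is a genuine attack on the conjecture and goes well beyond anything the paper attempts. Your Step~1 is correct and matches the reduction the paper records in \eqref{eq:LPTW}; eliminating $\Im\varphi_v^{\text{TW}}$ to obtain $(1+v^2\alpha^4\partial_1^2)\Phi=-\sigma_{\tilde\psi_v}$ is a nice and, as far as I can see, new observation. Your Step~2 is also correct: the symbol $1-v^2\alpha^4k_1^2$ vanishes on $\Pi_\pm$, and since $\widehat{|\tilde\psi_v|^2}$ is continuous (being the Fourier transform of an $L^1$ function), the requirement $\widehat\Phi\in L^2$ genuinely forces $\widehat{|\tilde\psi_v|^2}\equiv 0$ on $\Pi_\pm$. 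This is a clean necessary condition that the paper does not isolate.

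You correctly identify that Step~3 is where the argument becomes incomplete, and your honest closing paragraph is exactly right. Two comments on the sketch. In~3(a), the inequality $(v\alpha^2)^{-1}<(\int|x_1|\,|\tilde\psi_v|^2)^{-1}$ involves $\tilde\psi_v$ itself, so ``large $v$'' here is not an a~priori range but a condition coupled to the unknown; to make this a proof you would need a moment bound uniform in~$v$, which already requires control on the full nonlinear system. In~3(b), your diagnosis that the obstruction is an exponentially-small-beyond-all-orders quantity is spot on, and this is precisely why the problem is open: a Lyapunov--Schmidt reduction in $H^1\times L^2$ cannot see it, and upgrading to an analytic framework in which the reduced equation retains the $e^{-c/(v\alpha^2)}$ term, while simultaneously controlling the back-reaction of the resonant phonon modes on $\tilde\psi_v$, is the heart of the difficulty. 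Your unique-continuation suggestion for intermediate $v$ is plausible in spirit but would need a form adapted to the nonlocal coupling through $(-\Delta)^{-1/2}$.
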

 
The motivation for this conjecture comes from the vanishing of the sound velocity in the medium. An initially moving electron can be expected to be slowed down to zero speed by emitting radiation. Establishing this effect rigorously for the LP equations remains an open problem, however.
 
 If one {\em assumes} the existence of traveling wave solutions, at least for small $v$, one can predict an effective mass that agrees with our formula \eqref{eq:OurEffMass}, as we shall now demonstrate.
From the LP equations \eqref{eq:LP} one easily sees that  a traveling wave solution needs to satisfy 
\begin{equation}\label{eq:LPTW}
\begin{aligned}
-i v \partial_1  \psi_v^{\text{TW}}  & =\left(  h_{ \varphi_v^{\text{TW}}} + e_v \right)   \psi_v^{\text{TW}}  \\
-i \alpha^2 v \partial_1 \varphi_v^{\text{TW}}  & = \varphi_v^{\text{TW}} + \sigma_{\psi_v^{\text{TW}}} .
\end{aligned}
\end{equation}
We shall denote by $E^{\text{TW}}(v)$ the  energy of the traveling wave as a function of the velocity $v\in \R$, i.e. 
\begin{align}
\label{eq:Evdef}
E^{\text{TW}}(v):=\G(\psi_v^{\text{TW}},\varphi_v^{\text{TW}}) .
\end{align}
In the following, we assume that $e_v=\mu_{\rm P}+O(v^2)$ and that the traveling wave is of the form 
\begin{align}
\label{eq:ansatz_TW}
(\psi_v^{\text{TW}}, \varphi_v^{\text{TW}}) =   \left( \frac{\psi_{\rm P} + v \xi_v}{\| \psi_{\rm P} + v \xi_v \|_2}, \varphi_{\rm P} + v \eta_v \right) ,
\end{align}  
with both $\xi_v$ and $\eta_v$ bounded in $v$ and converging to some $(\xi,\eta)$ as $v\to 0$.  In other words, we assume that the traveling waves have a suitable differentiability in $v$, at least for small $v$, and converge to the standing wave solution $(e^{-i \mu_{\rm P} t}\psi_{\rm P}, \varphi_{\rm P})$ for $v=0$. W.l.o.g. we may also assume that $\xi_v$ is orthogonal to $\psi_{\rm P}$. 

We can then use that 
\begin{align}
\label{eq:exp_norm}
\frac{1}{\| \psi_{\rm P} + v \xi_v \|_2^2} &= 1- v^2  \frac{ \| \xi_v \|_2^2}{\| \psi_{\rm P} + v \xi_v\|_2^2} 
= 1 - v^2 \| \xi \|_2^2 + o (v^2) 
\end{align}
in order to linearize the traveling wave equations \eqref{eq:TWdef}, obtaining that $(\xi, \eta)$ solves
\begin{align}
\label{eq:xieta}
\begin{pmatrix}
i \partial_1 \psi_{\rm P} \\ i \alpha^2 \partial_1 \varphi_{\rm P} 
\end{pmatrix}= 
\begin{pmatrix}
H_{\rm P} & 2
\psi_{\rm P}(-\Delta)^{-1/2} \Re\\ 2
(-\Delta)^{-1/2} \psi_{\rm P}   \Re & 1
\end{pmatrix}
\begin{pmatrix}
\xi \\ \eta
\end{pmatrix},
\end{align}
where $H_{\rm P}=h_{\varphi_{\rm P}}-\mu_{\rm P}$, as defined in \eqref{eq:def-HP}. Splitting into real and imaginary parts, we equivalently find 
\begin{align}
&H_{\rm P} \Im \xi =  \partial_1 \psi_{\rm P} \label{eq:ne_Imxi}\\
&\Im \eta =\alpha^2 \partial_1 \varphi_{\rm P} \label{eq:ne_Imeta} \\
&H_{\rm P} \Re \xi + 2 
 \psi_{\rm P} \left( - \Delta \right)^{-1/2} \Re \eta  = 0 \label{eq:ne_Rexi}\\
&2 
\left( -\Delta \right)^{-1/2} \psi_{\rm P} \Re \xi + \Re \eta  =0. \label{eq:ne_Reeta}
\end{align}
Combining \eqref{eq:ne_Rexi} and \eqref{eq:ne_Reeta} gives $(H_{\rm P} - 4 X_{\rm P}) \Re \xi =0$, with $X_{\rm P}$ defined after \eqref{36}. It was shown in  \cite{lenzmann2009uniqueness} that the kernel of $H_{\rm P} - 4 X_{\rm P}$ is spanned by $\nabla \psi_{\rm P}$, hence $\Re \xi\in \spn \{\nabla \psi_{\rm P}\}$. Eq, 
%
\eqref{eq:ne_Reeta} then implies that $\Re \eta\in \spn \{ \nabla \varphi_{\rm P}\}$. 

Using these equations and \eqref{eq:exp_norm} in  the expansion \eqref{eq:expressionG}, it is  straightforward to obtain
\begin{align}
\label{eq:EvexpansionTW}
E^{\text{TW}}(v) = e_{\rm P} + v^2 \left( \frac 14 + \alpha^4 \| \partial_1 \varphi_{\rm P} \|_2^2  \right) + o( v^2), 
\end{align}
which agrees with \eqref{eq:Ev-expansion-alpha-0} for the case $\alpha =0$, and also with \eqref{eq:thm} to leading order in $v$. In particular, \eqref{eq:mass-predTW} holds.

\subsection{Effective mass with alternative definition for the electron's velocity} 
\label{sec:diff-vel}
In this Section, we discuss a different approach to the definition of the effective mass. This approach is based on an alternative way of defining the electron's position and velocity. While in Section \ref{sec:vel} we use the standard definition from quantum mechanics,  here we use a definition similar to the one of the phonons' position and velocity \eqref{eq:defXph}.  For this purpose, we recall Remark \ref{rem:E} and that $\delta^*$ has been chosen such that the condition $\E (\psi_0) \leq \G( \psi_0, \varphi_0) \leq e_{\rm P} + \delta^*$ ensures that for all times there exists a unique $L^2$-projection $e^{i\theta(t)}\psi_{\rm P}^{y(t)}$ of  $\psi_t$ onto the manifold $\mathcal{M}_\E$.  Then, we define the electron's position and velocity by
\begin{align}
\label{eq:def-tildeX}
\widetilde{\text{X}}_{\text{el}}(t) = y(t), \quad \widetilde{\text{V}}_{\text{el}}(t) = \dot{y}(t).
\end{align}
Similarly to the conditions (i) and (ii) in Section \ref{sec:vel}, we define the set of admissible initial data as  
\begin{align}
\widetilde{I}_v = \lbrace (\psi_0, \varphi_0) \, \vert \,  \text{(i),(ii') are satisfied} \rbrace
\end{align}
where 
\begin{itemize}
	\item[(ii')] $\widetilde{\text{V}}_{\text{el}}(0) =\Vph(0)=v (1,0,0)$. 
\end{itemize}

Note that we are leaving the parameter $\dot{\theta}(0)$ free, which in this case is also relevant. In other words, we have 
\begin{align}
\widetilde{I}_v = \cup_{\kappa\in \R} \widetilde{I}_{v,\kappa},
\end{align} 
where
\begin{align}
\widetilde{I}_{v,\kappa}=\lbrace (\psi_0, \varphi_0) \, \vert \,  \text{(i),(ii') are satisfied and}\; \dot{\theta}(0)=\kappa \rbrace.
\end{align}
Minimizing now the energy over all states of the set $\widetilde{I}_v$
\begin{align}
\widetilde{E}(v):=\inf_{(\psi_0,\varphi_0)\in \widetilde{I}_v} \G(\psi_0,\varphi_0),
\end{align}
leads to an energy expansion in $v$ that differs from the one of Theorem \ref{thm:freephase} in its second order. 

\begin{prop}
	\label{rmk:EV}
	As $v \rightarrow 0$,  we have 
	\begin{align}
	\label{eq:Evtilde}
	\widetilde{E}(v) =e_{\rm P}+v^2 \left(  \frac{\| \nabla \psi_{\rm P} \|_2^4}{3 \|\nabla \varphi_{\rm P} \|_2^2}  + \frac{\alpha^4}{3} \| \nabla \varphi_{\rm P} \|_2^2  \right) + O(v^3).	
	\end{align}
\end{prop}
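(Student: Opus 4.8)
The plan is to adapt the proof of Theorem~\ref{thm:freephase} essentially verbatim, with the single change that the electron-velocity constraint $\Vel(0)=v(1,0,0)$ is replaced by the constraint $\widetilde{\Vel}(0)=v(1,0,0)$, and to track how this modifies the term $\langle \Im\psi_0, H_{\rm P}\,\Im\psi_0\rangle$ in the expansion \eqref{eq:expressionG}. The starting point is the same expansion \eqref{eq:expressionG}, which holds for any admissible $(\psi_0,\varphi_0)$ of energy $e_{\rm P}+O(v^2)$; the contributions from $\|\Re\delta_2 + 2(-\Delta)^{-1/2}(\psi_{\rm P}\Re\delta_1)\|_2^2$ and from $\langle Q(H_{\rm P}-4X_{\rm P})Q \rangle_{\Re\delta_1}\geq 0$ are still handled by dropping them (lower bound) or by making them vanish to leading order via the trial state (upper bound), and the phonon term still contributes $v^2\alpha^4\|\partial_1\varphi_{\rm P}\|_2^2 + O(v^3) = \frac{v^2\alpha^4}{3}\|\nabla\varphi_{\rm P}\|_2^2 + O(v^3)$ exactly as in \eqref{325}. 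So the whole difference between $\widetilde E(v)$ and $E(v)$ lies in the minimum of $\langle \Im\psi_0, H_{\rm P}\,\Im\psi_0\rangle$ subject to the new constraint.

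The key computation is therefore to translate the condition $\widetilde\Vel(0)=v(1,0,0)$ into a linear constraint on $\Im\psi_0$. By Remark~\ref{rem:E}, $P_{L^2}^{\minE}(\psi_0)=e^{i\theta(0)}\psi_{\rm P}^{y(0)}$, and arguing exactly as in \eqref{eq:defVph2}--\eqref{eq:orthPhon} (i.e.\ differentiating in $t$ the orthogonality relation characterizing the $L^2$-projection onto $\minE$, using \eqref{eq:LP}, and evaluating at $t=0$ with $y(0)=0$, $\theta(0)=0$ after centering), one obtains a relation expressing $\dot y(0)$ and $\dot\theta(0)$ in terms of $\psi_0$. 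Writing $\psi_0 = \psi_{\rm P}+\delta_1$ with $\|\delta_1\|_{H^1}=O(v)$ (which follows, as in the proof of Theorem~\ref{thm:freephase}, from the upper bound plus coercivity of $\E$ and the centering \eqref{eq:centeringII}), the leading-order form of this relation reads, schematically,
\begin{align}
\label{eq:tildeconstraint}
c\, \langle \Im\psi_0, \partial_1\psi_{\rm P}\rangle = v\, \| \nabla \psi_{\rm P} \|_2^2 + O(v^2),
\end{align}
for an explicit constant $c$ coming from $\langle \psi_{\rm P}, \partial_1\partial_1\psi_{\rm P}\rangle = -\frac13\|\nabla\psi_{\rm P}\|_2^2$ (radiality), together with a second relation fixing $\dot\theta(0)=\kappa$ which decouples and imposes no constraint on $\Im\psi_0$ beyond a component along $\psi_{\rm P}$ that does not affect $\langle\Im\psi_0,H_{\rm P}\Im\psi_0\rangle$. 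Then, exactly as in \eqref{eq:completing-square}, completing the square gives
\begin{align}
\langle \Im\psi_0, H_{\rm P}\,\Im\psi_0\rangle \;\geq\; 2v'\,\langle \Im\psi_0, \partial_1\psi_{\rm P}\rangle - (v')^2\,\langle \partial_1\psi_{\rm P}, H_{\rm P}^{-1}\partial_1\psi_{\rm P}\rangle,
\end{align}
and optimizing over the Lagrange multiplier $v'$, using \eqref{eq:sandwich} and the rescaled constraint \eqref{eq:tildeconstraint}, yields the leading term $v^2\,\dfrac{\|\nabla\psi_{\rm P}\|_2^4}{3\|\nabla\varphi_{\rm P}\|_2^2} + O(v^3)$ after simplification (here one uses $\langle\partial_1\psi_{\rm P},\partial_1\psi_{\rm P}\rangle=\frac13\|\nabla\psi_{\rm P}\|_2^2$ and that $\|\nabla\varphi_{\rm P}\|_2$ enters through the matrix $A$ of Lemma~\ref{lem:minEFproj}). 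The matching upper bound is obtained with the trial state $(\bar\psi_0,\bar\varphi_0)=\big(f_v\psi_{\rm P} + i g_v H_{\rm P}^{-1}\partial_1\psi_{\rm P},\ \varphi_{\rm P}+iv\alpha^2\partial_1\varphi_{\rm P}\big)$ but with $g_v$ rescaled so that the new constraint \eqref{eq:tildeconstraint} rather than \eqref{eq:trial-v} holds; one checks as before that this state lies in $\widetilde I_v$ (for a suitable choice of $\kappa$) and evaluates its energy via \eqref{eq:expressionG}.

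The main obstacle is the first step of the key computation: deriving the precise leading-order constraint \eqref{eq:tildeconstraint}, i.e.\ correctly linearizing the implicit relation for $\widetilde\Vel(0)=\dot y(0)$ that comes from differentiating the projection-orthogonality condition for $\minE$. One has to be careful that $\minE$ carries both a translation and a phase symmetry, so the relevant Jacobian is a $4\times 4$ (rather than $3\times 3$) matrix mixing $\dot y(0)$ and $\dot\theta(0)$; one must check that, at leading order, the $\dot y$-block decouples from $\dot\theta$ (by the parity/radiality of $\psi_{\rm P}$, the cross terms $\langle \partial_i\psi_{\rm P},\psi_{\rm P}\rangle$ vanish) so that the electron-position constraint is genuinely of the one-dimensional form \eqref{eq:tildeconstraint}, and that the factor relating it to the naive constraint \eqref{eq:constranit-Im} is exactly $\|\nabla\psi_{\rm P}\|_2^2 / (\tfrac13\|\nabla\psi_{\rm P}\|_2^2) \cdot(\ldots)$ producing the stated coefficient. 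Everything else — the phonon term, the positivity of $Q(H_{\rm P}-4X_{\rm P})Q$, the $O(v)$ control of $\delta_1,\delta_2$, the $O(v^3)$ error bookkeeping — is identical to the proof of Theorem~\ref{thm:freephase} and requires no new ideas.
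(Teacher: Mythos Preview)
Your overall strategy is right, but the linearized electron constraint \eqref{eq:tildeconstraint} is of the wrong form, and this error propagates to both bounds. Differentiating the $L^2$-projection orthogonality condition $\Re\langle\psi_t,e^{i\theta(t)}\partial_j\psi_{\rm P}^{y(t)}\rangle=0$ brings in $\partial_t\psi_t=-i h_{\varphi_t}\psi_t$, so at $t=0$ (after centering and using $\dot\theta(0)=-\mu_{\rm P}+O(v)$) the constraint reads
\[
\langle \partial_1\psi_{\rm P},\, H_{\rm P}\,\Im\psi_0\rangle \;=\; v\,\|\partial_1\psi_{\rm P}\|_2^2 + O(v^2),
\]
i.e.\ it is a linear functional involving $H_{\rm P}\Im\psi_0$, not $\Im\psi_0$ alone. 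This is exactly the paper's \eqref{eq:necessary1} specialized to leading order. The phonon analogue \eqref{eq:defVph2} that you invoke does \emph{not} carry the Hamiltonian because the phonon equation in \eqref{eq:LP} has no differential operator on $\varphi$; for the electron it does.

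Consequently the correct completing of the square is \eqref{eq:completing-square2}: one subtracts $vc\,\partial_1\psi_{\rm P}$ (not $v'H_{\rm P}^{-1}\partial_1\psi_{\rm P}$), and the quantity $\|\nabla\varphi_{\rm P}\|_2^2$ enters via the identity $\langle\partial_1\psi_{\rm P},H_{\rm P}\,\partial_1\psi_{\rm P}\rangle=\|\partial_1\varphi_{\rm P}\|_2^2$, \emph{not} through the Jacobian $A$ of Lemma~\ref{lem:minEFproj} (that lemma concerns $\minF$, not $\minE$; the Jacobian for $\minE$ involves only $\psi_{\rm P}$). Likewise the optimal trial state has imaginary part proportional to $\partial_1\psi_{\rm P}$, not $H_{\rm P}^{-1}\partial_1\psi_{\rm P}$: the paper in fact observes that the state $f_v\psi_{\rm P}+ivH_{\rm P}^{-1}\partial_1\psi_{\rm P}$ (your proposed trial state, unrescaled) already lies in $\widetilde I_v$ and has energy $e_{\rm P}+m_{\rm eff}\,v^2/2+O(v^3)$, which is strictly larger than $\widetilde E(v)$. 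So your upper-bound trial state would not yield the claimed coefficient.
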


The energy expansion in \eqref{eq:Evtilde} leads to the effective mass 
\begin{align}
 \widetilde{m}_{\text{eff}} = \lim_{v \rightarrow 0} \frac{\widetilde{E}(v) -e_{\rm P}}{v^2/2} = \frac{2 \| \nabla \psi_{\rm P} \|_2^4}{3 \|\nabla \varphi_{\rm P} \|_2^2 } + \frac{2\alpha^4}{3} \| \nabla \varphi_{\rm P} \|_2^2 
\end{align} 
which agrees with \eqref{eq:OurEffMass} and \eqref{eq:mass-predTW} in leading order for large $\alpha$ only (and thus still confirms the Landau--Pekar prediction \cite{landau1948effective}), but differs in the $O(1)$ term.  In fact, it turns out that $\widetilde{m}_{\text{eff}} < m_{\text{eff}}$ with $m_{\text{eff}}$ defined in \eqref{eq:mass-predTW}.   

This follows from the observation that the trial state
\begin{align}
\label{eq:trialstate-mass}
(\widetilde{\psi}_0,\widetilde{\varphi}_0)=\left(\frac{f_v \psi_{\rm P}+ivH_{\rm P}^{-1}\partial_1 \psi_{\rm P}}{\|f_v \psi_{\rm P}+ivH_{\rm P}^{-1}\partial_1 \psi_{\rm P}\|}, \varphi_{\rm P}+iv\alpha^2 \partial_1 \varphi_{\rm P}\right)
\end{align}
with $f_v = \frac{1}{2}\left( 1 + \sqrt{1 - v^2/(4 \| \partial_1 \psi_{\rm P} \|_2^2)} \right)$ (which coincides up to terms of order $v^2$ with the trial state \eqref{eq:trialstate}) is an element of $\widetilde{I}_{v,\bar \kappa}$ for $\bar \kappa= - \mu_{\rm P}+4\|\partial_1 \psi_{\rm P}\|_2^2(f_v-1)$ and is such that $\G (\widetilde{\psi}_0,\widetilde{\varphi}_0) = e_{\rm P} + m_{\text{eff}}\;  v^2/2 + O(v^3)$.  Thus,  $\widetilde{m}_{\text{eff}} \leq m_{\text{eff}}$ and equality holds if and only if equality (up to terms $o(v^2)$) holds in \eqref{eq:completing-square2}. This is the case if and only if 
\begin{align} 
Q_{\psi_{\rm P}} \left( \Im\widetilde{\psi}_0 - cv  \partial_1 \psi_{\rm P}\right) = o(v).
\end{align}
Using \eqref{eq:trialstate-mass}, equality holds if and only if
\begin{align} 
 0 = H_{\rm P}^{-1}\partial_1 \psi_P -  c  \partial_1 \psi_{\rm P} = -\left( x_1/2 +  c  \partial_1 \right)  \psi_{\rm P} 
\end{align}
i.e. , recalling the radiality of $\psi_{\rm P}$,  if and only if $\psi_{\rm P}$ is a Gaussian with variance $\sigma^2 = 1/(2c)$. Since $\psi_{\rm P}$ satisfies the Euler--Lagrange equation
\begin{align}
H_{\rm P} \psi_{\rm P}=0 \iff V_{\varphi_{\rm P}} \psi_{\rm P}=(-\Delta+\mu_P) \psi_{\rm P},
\end{align}
it cannot be a Gaussian and therefore $\tilde{m}_{\text{eff}}<m_{\text{eff}}$.

We present only a sketch of the proof of Proposition \ref{rmk:EV}, since it uses very similar arguments as the proof of Theorem \ref{thm:freephase}.  

\begin{proof}[Sketch of Proof of Proposition \ref{rmk:EV}]  \textbf{Upper bound}: We use the alternative trial state 
\begin{align}
(\widetilde{\psi}_0,\widetilde{\varphi}_0)=\left(\frac{f_v \psi_{\rm P}+ivc\partial_1 \psi_{\rm P}}{\|f_v \psi_{\rm P}+ivc\partial_1 \psi_{\rm P}\|}, \varphi_{\rm P}+iv\alpha^2 \partial_1 \varphi_{\rm P}\right),
\end{align}
with 
\begin{align}
f_v:=\frac{1+\sqrt{1+4c^2v^2\|\partial_1 \psi_{\rm P}\|^2}}2, \quad c:=\frac{\|\partial_1 \psi_{\rm P}\|^2}{\|\partial_1\varphi_{\rm P}\|^2} . 
\end{align}
With similar arguments as in the previous section, one can verify that $(\widetilde{\psi}_0,\widetilde{\varphi}_0)\in \widetilde{ I}_v$, in particular  $(\widetilde{\psi}_0,\widetilde{\varphi}_0)\in \widetilde{ I}_{v, \kappa}$ with $\kappa = -\mu_{\rm P}+\frac{-1+\sqrt{1+4 c^2v^2\|\partial_1 \psi_{\rm P}\|^2}}{2c}$.

Note that, similarly to \eqref{eq:defVph2}, one can derive necessary conditions for the velocities  $\dot{y}(0), \dot{\theta}(0)$ (using $\widetilde{\text{X}}_{\text{el}} (0) =0, \theta (0) =0$),  namely 
\begin{align}
\label{eq:necessary1}
&\bra{[h_{\Re\widetilde{\varphi}_0}+\dot{\theta}(0)] \Im \widetilde{\psi}_0- \dot{y}(0) \cdot \nabla  \Re \widetilde{\psi}_0}\ket{(u\cdot \nabla) \psi_{\rm P}}=0 \quad \text{for all} \quad u \in \mathbb{S}^2 
\end{align}
and 
\begin{align}
\label{eq:necessary2}
&\bra{\psi_{\rm P}}\ket{(h_{\Re\widetilde{ \varphi}_0}+\dot{\theta}(0)) \Re \widetilde{\psi}_0+\dot{y}(0) \cdot \nabla \Im \widetilde{\psi}_0}=0  .
\end{align}
Straightforward computations then show that
\begin{align}
\widetilde{E}(v)\leq \G(\widetilde{\psi}_0,\widetilde{\varphi}_0)=e_{\rm P}+v^2 \left(\frac{\| \partial_1 \psi_{\rm P} \|_2^4}{\|\partial_1 \varphi_{\rm P} \|_2^2} +  \alpha^4 \| \partial_1 \varphi_{\rm P} \|_2^2  \right) + O(v^3).
\end{align}

\textbf{Lower bound}: We proceed similarly to the lower bound in the previous section.  First, we assume w.l.o.g.  that
\begin{align}
P_{L^2}^{\minE} ( \psi_0 ) = \psi_{\rm P}^{y(0)}, \quad P_{L^2}^{\minF} ( \varphi_0 ) = \varphi_{\rm P} ,
\end{align}
i.e.,  centering with respect to translations and changes of phase. 
We can then substitute the two conditions of (ii') and the conditions for $\psi_{\rm P}^{y(0)} $ (resp.  $\varphi_{\rm P}$)  to be the $L^2$-projection of $\psi_0$ (resp. $\varphi_0$) onto $\mathcal{M}_\E$ (resp.  $\mathcal{M}_\F$) with their analogue necessary conditions (whose computations proceed along the lines of \eqref{eq:necessary1} and \eqref{eq:necessary2}).  With this discussion, we are left with the task of minimizing $\G$ over the set
\begin{align}
\widetilde{I}_v':=\bigcup_{\kappa \in \R} \widetilde{I}'_{v,\kappa}
\end{align}
with 
\begin{align}
\label{eq:def-tildeI}
\widetilde{I}'_{v,\kappa}:=\Big\{(\psi_0,\varphi_0) \in\widetilde{ I}^*\,|
&P_{\nabla \psi_{\rm P}^{y(0)}}\left[(h_{\Re \varphi_0}+\kappa) \Im \psi_0 - v\partial_1 \Re \psi_0\right]=0 ,\nonumber\\
&P_{\psi_{\rm P}^{y(0)}}\left[(h_{\Re \varphi_0}+\kappa) \Re \psi_0+v\partial_1 \Im \psi_0 \right]=0,\nonumber\\
&P_{\nabla \varphi_{\rm P}} (\Im \varphi_0-v\alpha^2 \partial_1 \Re\varphi_0)=0\Big\}.
\end{align}
and 
\begin{align}
\widetilde{I}^*:=\left\{(\psi_0,\varphi_0) \,|\, \G(\psi_0,\varphi_0)\leq e_{\rm P}+\delta^*,\,\,\|\psi_0\|_2^2=1, \, \Re \psi_0 \perp \nabla \psi_{\rm P}^{y(0)},\, \Re \varphi_0 \perp \nabla \varphi_{\rm P}\right\}.
\end{align}

As in the previous section, one can argue by coercivity of $\E$ and $\F$ and the upper bound that it is possible to restrict to initial conditions such that $ \| \delta_2 \|_2,  \| \delta_1 \|_{H^1}, y(0) $ are all $O(v)$.  Moreover, the second constraint of the r.h.s.  of \eqref{eq:def-tildeI} shows that $\kappa = -\mu_{\rm P} + O(v)$. Thus, we are left with minimizing $\G$ over the set 
\begin{align}
\label{eq:tilde-I''}
\widetilde{I}''_v:=\widetilde{I}_v'\cap \left\{\kappa+\mu_{\rm P} =O(v),\, \|\delta_1\|_{H^1}=O(v),\,\|\delta_2\|_2=O(v)\right\}.
\end{align}
The lower bound is proven in the same way as before.  But instead of the constraint \eqref{eq:constranit-Im}, this time we need to minimize w.r.t. 
\begin{align}
\label{eq:constraint7}
P_{\nabla \psi_{\rm P}^{y(0)}}\left[(h_{\Re \varphi_0}+\kappa) \Im \psi_0 - v\partial_1 \Re \psi_0\right]=0. 
\end{align}
Since  $\kappa+\mu_{\rm P}, y_0, \|\delta_1\|_{H^1}$ and $\|\delta_2\|_2$ are all order $v$ and  $\psi_{\rm P} \in C_0^\infty (\R^3)$ (and these facts also allow to infer that $\psi_{\rm P}^{y(0)} = \psi_{\rm P} + O(v)$), the constraint \eqref{eq:constraint7} can be written as 
\begin{align}
\langle \nabla \psi_{\rm P} \vert H_{\rm P} \vert \Im \psi_0 \rangle = v \| \partial_1 \psi_P \|_2^2 (1,0,0) + O(v^2).
\end{align}
Denoting $c= \| \partial_1 \psi_P \|_2^2/\| \partial_1 \varphi_P \|_2^2$, we complete the square 
\begin{align}
\expval{H_{\rm P}}{\Im\psi_0} & = \expval{H_{\rm P}}{\Im\psi_0 - v c  \partial_1 \psi_{\rm P}} \notag \\
& \quad + 2v c \bra{\Im\psi_0}\ket{H_{\rm P} \vert \partial_1  \psi_{\rm P}} - c^2  v^2  \expval{H_{\rm P}}{  \partial_1 \psi_{\rm P}}\notag \\
&  \geq  2 c v \bra{\Im\psi_0}\ket{ H_{\rm P} \partial_1  \psi_{\rm P}} -c^2  v^2 \expval{H_{\rm P}}{  \partial_1 \psi_{\rm P}}  .  \label{eq:completing-square2}
\end{align}
 With the constraint \eqref{eq:constraint7} and  $\langle \partial_i \psi_{\rm P}\vert H_{\rm P} \vert \partial_j \psi_{\rm P} \rangle = \delta_{i,j}\| \partial_j \varphi_{\rm P} \|_2^2$, we arrive at \eqref{eq:Evtilde}. 
\end{proof}

\section{Conclusions} 
While a rigorous determination of the effective mass of a polaron described by the Fr\"ohlich model remains an outstanding open problem, we solve here the classical analog of this problem, where the polaron is described by the Landau--Pekar equations. Even though these equations are often invoked in heuristic derivations of the effective polaron mass, it is not at all obvious how to make such derivations rigorous since they rely, in one form or another, on the assumption of the existence of traveling waves. As argued above, the latter can not be expected to exist, however. We overcome this problem by introducing a novel variational principle, minimizing the Pekar energy functional over states of given initial velocity $v$, which can be defined in a natural way for all low-energy states. We hope that this novel point of view may in the future also shed  some light on the corresponding problem for the Fr\"ohlich polaron, in particular in view of the recent derivation \cite{LMRSS} of the Landau--Pekar equations from the Fr\"ohlich model in the strong coupling limit. 

\appendix
\section{Well-posedness and regularity of the projections onto $\minF$}
\label{app:projection}
 
Similar arguments to the ones used in the following proof are contained in \cite{feliciangeli2021strongly}, where  the functional $\mathcal{F}$ is investigated in the case of a torus in place of $\R^3$. Remark \ref{rem:E} on the properties $\minE$ can be shown with a similar approach, but we omit its proof. 

\begin{proof}[Proof of Lemma \ref{lem:minEFproj}] We need to prove that there exists $\delta>0$ such that for any $\varphi\in (\minF)_\delta$ there exists a unique $z_\varphi$ identifying the projection of $\varphi$ onto $\minF$, and such that $z_{\varphi}$ is differentiable at any $\varphi\in (\minF)_{\delta}$.  As the problem is invariant w.r.t. translations,  we can w.l.o.g. restrict to show differentiability at $\varphi_0 \in (\minF)_{\delta}$ such that $z_{\varphi_0}=0$.

We define the function $F: L^2 ( \Rtre ) \times \Rtre \rightarrow \Rtre$ given, component-wise, by 
\begin{align}
 F_i (\varphi, z) = \Re \langle \varphi \vert \partial_i \varphi_P^z \rangle \quad \text{for} \quad i=1,2,3.
\end{align}
By definition of $z_\varphi$, we have $F( \varphi_0, 0 ) = 0$ and $F( \varphi, z_{\varphi} ) = 0$, for any $\varphi$ in a sufficiently small neighborhood of $\varphi_0$. Hence, we set out to use the implicit function theorem to determine properties of $z_{\varphi}$. Observe that,  for any  $\eta \in L^2 ( \Rtre)$, $z \in \Rtre$ and $i,j\in \{ 1,2,3 \} $,  we have 
\begin{align}
\partial_t  F_i ( \varphi + t \eta, z ) 
= \Re \langle \eta \vert \partial_i \varphi_P^z \rangle \quad \text{and} \quad  \partial_{z_j} F_i ( \varphi, z) = - \Re \langle \varphi \vert \partial_i\partial_j \varphi_P^z \rangle \, .
\end{align}
Since $\varphi_P \in C^\infty ( \Rtre)$, the map $(\minF)_{\delta} \ni \varphi \mapsto \text{det} \left( \frac{\partial F_i}{\partial z_j }  (\varphi, z)  \right)_{i,j=1, \dots, 3}$ is continuous w.r.t the $L^2$-norm and, by radiality of $\varphi_P$, 
\begin{align}
\text{det} \left( \frac{\partial F_i}{\partial z_j } (\varphi_P, 0) \right)_{i,j=1, \dots, 3} =  \frac{1}{9} \| \nabla \varphi_P \|_2^2>0\,.
\end{align}
Thus,  it follows that $\text{det} \left( \frac{\partial F_i}{\partial z_j }(\varphi_0, 0)   \right)_{i,j=1, \dots, 3} > 0 $, uniformly in $\varphi_0$ for sufficiently small $\delta >0$.  By the implicit function theorem,  there exists a unique differentiable  $z_\varphi: (\minF)_{\delta} \rightarrow \Rtre $ whose partial derivative in the direction $\eta \in L^2 (\Rtre)$ at $\varphi_0$  is given by 
\begin{align}
\partial_t   z_{\varphi_0 + t \eta} \restriction_{t=0} = \left[\left( \frac{\partial F_i}{\partial z_j }(\varphi_0, z_{\varphi_0})   \right)_{i,j=1, \dots, 3}\right]^{-1}\Re  \langle \eta \vert \partial_i \varphi_P^{z_{\varphi_0}} \rangle \; . 
\end{align}

\end{proof}

{\bf Acknowledgments:} We thank Herbert Spohn for helpful comments. Funding from the European Union's Horizon 2020 research and innovation programme under the ERC grant agreement No. 694227 (D.F. and R.S.) and under the Marie Sk\l{}odowska-Curie  Grant Agreement No. 754411 (S.R.) is gratefully acknowledged.

\bibliographystyle{siam}

\end{document}